\runningtitle{Coin Theorems and the Fourier Expansion}
\runningauthor{Rohit Agrawal}
\newcommand{\SwappedDeclarePairedDelimiter}[3]{%
	\expandafter\DeclarePairedDelimiter\csname Swapped\string#1\endcsname{#2}{#3}%
	\newcommand#1{%
		\@ifstar{\csname Swapped\string#1\endcsname}
						{\@ifnextchar[{\csname Swapped\string#1\endcsname}
													{\csname Swapped\string#1\endcsname*}%
						}%
	}%
}
\SwappedDeclarePairedDelimiter{\of}{(}{)}
\providecommand\given{}
\newcommand\SetSymbol[1][]{%
\nonscript\:#1\vert
\allowbreak
\nonscript\:
\mathopen{}}
\DeclarePairedDelimiterX\setd[1]\{\}{%
\renewcommand\given{\SetSymbol[\delimsize]}
#1
}
\def\set{\setd*}
\SwappedDeclarePairedDelimiter{\abs}{\lvert}{\rvert}
\SwappedDeclarePairedDelimiter{\brof}{[}{]}
\SwappedDeclarePairedDelimiter{\ceil}{\lceil}{\rceil}
\DeclareMathOperator*{\Exp}{\mathbb{E}}
\newcommand{\E}[2][]{\Exp_{#1}\brof{#2}}
\newcommand{\PR}[2][]{\Pr_{#1}\brof{#2}}
\newcommand{\eps}{\varepsilon}
\newcommand{\RR}{\mathbb{R}}
\newcommand{\F}{\mathcal{F}}
\newcommand{\pmone}{\set{-1, 1}}
\newcommand{\fourier}[2]{\widehat{#1}\of{#2}}
\newcommand{\fourierset}[2]{\fourier{#1}{\set{#2}}}
\newcommand{\levelk}[1][k]{\mathcal{L}_1^{#1}}
\newcommand{\ti}[2][\brof]{\mathbf{I}#1{#2}}
\newcommand{\coinset}[2]{\ensuremath{X_{#1}^{#2}}}
\newcommand{\coins}[1][\eps]{\coinset{#1}{n}}
\newcommand{\fdiff}[1][\eps]{\E{f\of{\coins[#1]}} - \E{f\of{\coins[0]}}}
\newcommand{\abseps}{\abs{\eps}}
\newcommand{\dist}[1]{\mathcal{D}^{(n)}\of{#1}}
\newcommand{\randf}{\pmb{\tilde f}}
\newcommand{\randg}{\pmb{\tilde g}}
\newcommand{\randF}{\pmb{\tilde{\mathcal{F}}}}
\DeclareMathOperator{\Var}{Var}
\newtheorem{thm}{Theorem}[section]
\newtheorem{lem}[thm]{Lemma}
\newtheorem{prop}[thm]{Proposition}
\newtheorem{cor}[thm]{Corollary}
\theoremstyle{definition}
\newtheorem{defn}[thm]{Definition}
\theoremstyle{remark}
\newtheorem{ques}[thm]{Question}
\begin{document}

\begin{frontmatter}

\title{Coin Theorems and the Fourier Expansion}

\author[rohit]{Rohit Agrawal\thanks{Supported by the Department of Defense
	(DoD) through the National Defense Science \& Engineering Graduate
	Fellowship (NDSEG) Program.}}

\begin{abstract}
	In this note we compare two measures of the complexity of a class $\F$
	of Boolean functions studied in (unconditional) pseudorandomness:
	$\F$'s ability to distinguish between biased and uniform coins (the
	\emph{coin problem}), and the norms of the different levels of the
	Fourier expansion of functions in $\F$ (the \emph{Fourier growth}).
	We show that for coins with low bias $\eps = o(1/n)$, a function's
	distinguishing advantage in the coin problem is essentially equivalent
	to $\eps$ times the sum of its level $1$ Fourier coefficients, which
	in particular shows that known level $1$ and total influence bounds
	for some classes of interest (such as constant-width read-once
	branching programs) in fact follow as a black-box from the
	corresponding coin theorems, thereby simplifying the proofs of some
	known results in the literature. For higher levels, it is well-known
	that Fourier growth bounds on all levels of the Fourier spectrum imply
	coin theorems, even for large $\eps$, and we discuss here the
	possibility of a converse.
\end{abstract}

\keywords{coin problem, analysis of boolean functions}

\end{frontmatter}


\section{Introduction}

A natural question one can ask when studying a limited model of
computation is how well it can solve some basic computational
task, such as distinguishing between an unfair and a fair coin:
\begin{defn}[Coin problem]
	The \emph{advantage $\alpha$} of a Boolean function $f:\pmone^n \to \pmone$ in
	\emph{distinguishing $\eps$-biased coins from uniform coins} is
	defined to be
	\[
		\alpha = \abs{\fdiff},
	\]
	where $\coins[\delta]$ are iid random variables over $\pmone$ with
	expectation $\delta$, so that $\coins[0]$ are uniform random bits.

	A set $\F$ of Boolean functions is said to \emph{solve the $\eps$-coin
	problem with advantage $\alpha$} for
	\[
		\alpha = \max_{f\in \F} \abs{\fdiff}.
	\]
\end{defn}

One can see (e.g.~via the equivalence of $\ell_1$ and total variation
distance of probability distributions) that the unique Boolean function
$f$ achieving the greatest distinguishing advantage is of the form $f(x)
= 1$ if and only if the number of $1$s in $x$ is at least $k$, for the
smallest $k$ such that $(1 + \eps)^k(1-\eps)^{n-k} \geq 1$. This is
simply a (symmetric) linear threshold function, and so one sees that the
coin problem for a class $\F$ of Boolean functions is closely related to
the ability of $\F$ to approximate threshold functions, and in
particular the majority function. The study of threshold functions in
limited models is extensive, with early works due to e.g.~Ajtai
\cite{ajtai_sigma_1983} and Valiant \cite{valiant_short_1984}, and early
explicit consideration of the coin problem by Shaltiel and Viola
\cite{sha_vio_hardness_2010}, Brody and Verbin (who introduced the name)
\cite{bro_ver_coin_2010}, and Aaronson \cite{aaronson_bqp_2010}. These
results are generally stated as a \emph{coin theorem}, giving a
statement of the form $\F_n$ (parametrized by some parameter, e.g.~the
input length) cannot solve the $\eps$-coin problem except with some
advantage $\beta(n, \eps)$. Subsequent work has given tight coin
theorems on constant-width read-once branching programs (Steinberger
\cite{steinberger_distinguishability_2013}), $\mathsf{AC^0}$ (Cohen,
Ganor, and Raz \cite{coh_gan_raz_two_2014}), $\mathsf{AC^0}[\oplus]$
(Limaye, Sreenivasaiah, Srinivasan, Tripathi, and Venkitesh
\cite{lim_sre_sri_tri_ven_fixeddepth_2018}), and product tests (Lee and
Viola \cite{lee_vio_coin_2018}).

The other main object we will discuss is the Fourier expansion of $f$,
for which we follow the notation of O'Donnell
\cite{odonnell_analysis_2014}.

\begin{defn}\label{defn:fourier_expansion}
	The \emph{Fourier expansion} of a function $f:\pmone^n \to \RR$ is its
	unique representation as a multilinear polynomial
	\[
		f(x) = f(x_1, \dots, x_n) = \sum_{S\subseteq [n]} \fourier{f}{S}
		\prod_{i\in S} x_i\,,
	\]
	where $[n] \coloneqq \set{1,
	2, \dotsc, n}$ and $\fourier{f}{S}$ is called the \emph{Fourier
	coefficient of $f$ on $S$}. The coefficient $\fourier{f}{S}$ is said
	to be at \emph{level $\abs{S}$}, and can be expressed as
	\[
		\fourier{f}{S} = \E[x]{f(x) \prod_{i\in S} x_i}
	\]
	where the expectation is taken over the uniform distribution over
	$\pmone^n$. In particular, $\fourier{f}{\emptyset} = \E[x]{f(x)}$.

	The \emph{total influence of $f$}, denoted $\ti{f}$, is defined as
	$\ti{f} = \sum_{i=1}^n \sum_{S\ni i} \fourier{f}{S}^2 =
	\sum_{S\subseteq [n]}\abs{S} \fourier{f}{S}^2$. If $f:\pmone^n \to
	\pmone$ is monotone, then the total influence has the simpler
	expression $\ti{f} = \sum_{i=1}^n \fourierset{f}{i}$.
\end{defn}

The use of Fourier analysis in the study of Boolean functions was
pioneered by the work of Kahn, Kalai, and Linial
\cite{kah_kal_lin_influence_1988}, which kicked off a long line of work,
including early results analyzing DNFs and CNFs by Brandman, Orlitsky, and
Hennessy \cite{bra_orl_hen_spectral_1990}, and $\mathsf{AC^0}$ by
Linial, Mansour, and Nisan \cite{lin_man_nis_constant_1993}.  More
recently, Reingold, Steinke, and Vadhan
\cite{rei_ste_vad_pseudorandomness_2013} introducted the concept of
\emph{Fourier growth} for use in constructions of pseudorandom
generators via the iterative random restriction framework of Ajtai and
Wigderson \cite{ajt_wig_deterministic_1989}, and particularly in the
sense of the later work of Gopalan, Meka, Reingold, Trevisan, and Vadhan
\cite{gop_mek_rei_tre_vad_better_2012}.

\begin{defn}[\cite{rei_ste_vad_pseudorandomness_2013}]
	Given $f:\pmone^n \to \mathbb R$, we define its \emph{level $k$ $\ell_1$ norm} to be
	\[
		\levelk(f) = \sum_{\abs{S} = k} \abs{\fourier{f}{S}}.
	\]
	For a class $\F$ of functions, we define the \emph{level $k$ $\ell_1$
	norm of $\F$} to be $\levelk(\F) = \sup_{f\in \F} \levelk(f)$.
	$\F$ is said to have \emph{Fourier growth with base $t$} if
	$\levelk(\F) \leq t^k$ for all $k$.
\end{defn}

Fourier growth bounds were introduced by
\cite{rei_ste_vad_pseudorandomness_2013} to capture the fact that
(roughly speaking) a random restriction that keeps input bits alive with
probability $p$ reduces the level $k$ $\ell_1$ norm of a function by
$p^k$, and so a random restriction keeping an $O(1/t)$ fraction of bits
alive simplifies a function with Fourier growth with base $t$ into one
with constant total Fourier $\ell_1$ norm, which is then easily
fooled by a small-bias generator (Naor and Naor
\cite{nao_nao_smallbias_1993}).  Fourier growth bounds have been studied
for classes such as $\mathsf{AC^0}$
\cite{lin_man_nis_constant_1993,hastad_slight_2001,tal_tight_2017},
regular and constant-width read-once branching programs
\cite{rei_ste_vad_pseudorandomness_2013,cha_hat_rei_tal_improved_2018},
and product tests \cite{lee_fourier_2019}.

In this work, we study implications between bounds on the advantage of a
class of Boolean functions in the coin problem and bounds on the Fourier
spectrum. Some connections along this line are well known, for example
it is folklore (see e.g.~Tal \cite{tal_tight_2017}) that Fourier growth
bounds of the form $\levelk(f) \leq t^k$ imply a strong coin theorem
with bound $\beta(n,\eps) = \eps\cdot O(t)$ for all $\eps = O(1/t)$.
Recently, Tal \cite{tal_private_2019} (who graciously allowed the
author to include his result in this note) showed that if a class of
functions $\F$ is closed under restriction, then for $\F$ to satisfy
such a strong coin theorem it is enough for it to merely have
$\levelk[1](\F) \leq t$ rather than have Fourier growth with base
$t$.

In this note, we give the first (to the best of our knowledge) results
in the other direction, showing that coin theorems for small $\eps =
o(1/n)$ are \emph{equivalent} to bounds on the sum of the level $1$
Fourier coefficients, and in particular to bounds on $\levelk[1](\F)$
assuming that these coefficients can be taken to be non-negative (e.g.
if $\F$ consists of monotone functions, or is closed under negating
input variables). Thus, perhaps surprisingly, even coin theorems for
only a small range of $\eps$ are sufficient to bound the level $1$
spectrum of the class $\F$, even for $\F$ not closed under restriction.
This allows us to give simple ``black-box'' proofs of existing level $1$
bounds in the literature, such as the bound for constant-width read-once
branching programs due to Steinke, Vadhan, and Wan
\cite[\S6]{ste_vad_wan_pseudorandomness_2017} in 2014, which was conjectured
by \cite{rei_ste_vad_pseudorandomness_2013}.

These results are summarized in \cref{fig:results}, showing the
implications between bounds on coin problems and Fourier growth for a
class $\F$, arranged from top to bottom in order of strength (under the
assumption that $\F$ is monotone or closed under negation of input
variables, but need not be closed under restriction). If $\F$ is
additionally closed under restriction, then the bottom two layers
collapse.

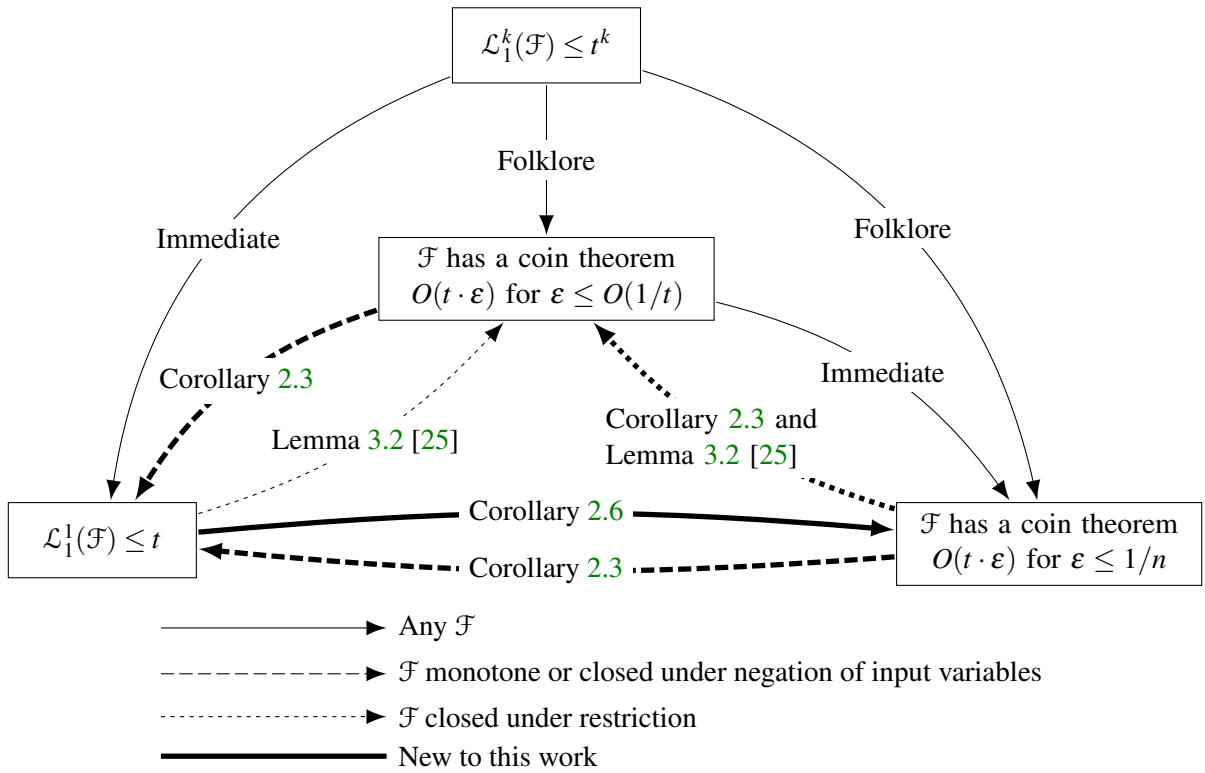
\begin{figure}[!htbp]
	\centering
	\begin{tikzpicture}
		\tikzset{>=latex,ultra thick}
		\tikzstyle{pred} = [rectangle, minimum width=2.5cm, minimum height=1cm, text centered, draw=black];
		\tikzstyle{restriction} = [dotted,draw,->];
		\tikzstyle{negation} = [dash pattern=on 5pt off 2pt,draw,->];
		\tikzstyle{always} = [draw,->];
		\tikzstyle{new} = [line width=2pt];
		\node (lk) [pred] {$\mathcal{L}_1^k(\F) \leq t^k$};
		\node (coint) [pred, text width=4.2cm, below=2cm of lk] {$\F$ has a coin theorem $O(t\cdot \eps)$ for $\eps \leq O(1/t)$};
		\node (l1) [pred,below left=2.37cm and 2.37cm of coint] {$\mathcal{L}_1^1(\F) \leq t$};
		\node (coin1) [pred, text width=3.8cm, below right=2.37cm and 2.37cm of coint] {$\F$ has a coin theorem $O(t\cdot \eps)$ for $\eps \leq 1/n$};

		\path [always] (coint) edge [bend left=20] node[fill=white] {Immediate} (coin1);
		\path [always] (lk) edge[bend right] node[fill=white] {Immediate} (l1);
		\path [always] (lk) -- node[fill=white] {Folklore} (coint);
		\path [always] (lk) edge [bend left] node[fill=white] {Folklore} (coin1);
		\path [restriction] (l1) edge [bend right=15] node[fill=white] {\cref{lem:restrictfl1givescoin} \cite{tal_private_2019}} (coint);
		\path [restriction,new] (coin1) edge [bend left=15] node[fill=white,text width=3.3cm] {{\cref{cor:coin_class} and \cref{lem:restrictfl1givescoin} \cite{tal_private_2019}}} (coint);
		\path [always,new] (l1) edge [bend left=5] node[fill=white] {{\cref{cor:l1coinsmalleps}}} (coin1);
		\path [negation,new] (coin1) edge [bend left=5] node[fill=white] {{\cref{cor:coin_class}}} (l1);
		\path [negation,new] (coint) edge [bend right=20] node[fill=white] {{\cref{cor:coin_class}}} (l1);

		\matrix[below]  at (current bounding box.south) {
			\node[right] (alw) at (3,0) {Any $\F$};
			\path [always] (0, 0) --  (3, 0); \\
			\node[right] (neg) at (3,0) {$\F$ monotone or closed under negation of input variables};
			\path [negation] (0, 0) --  (3, 0); \\
			\node[right] (rest) at (3,0) {$\F$ closed under restriction};
			\path [restriction] (0, 0) --  (3, 0); \\
			\node [right] (thiswork) at (3, 0) {New to this work};
			\path [draw,new,-] (0, 0) -- (3,0); \\
		};
		\end{tikzpicture}
	\caption{Implications studied in this work for a class $\F$ of Boolean functions,
	assuming $t\geq 1$}
	\label{fig:results}
\end{figure}

We also show (non-constructively) the existence of a class of Boolean
functions $\F$ closed under restriction and negations which satisfies a
coin theorem of the form $\eps \cdot O(\log n)$ for all $\eps \leq
O(1/\log n)$ (and thus by \cref{cor:coin_class} has $\levelk[1](\F) \leq
O(\log n)$ as shown in \cref{fig:results}), but yet has $\levelk[3](\F)
= \Omega(\log n \cdot n)$.  This result may be of interest because,
intriguingly, many natural classes of Boolean functions $\F$ with known
coin theorems or level-$1$ bounds are also known (or conjectured) to
satisfy corresponding Fourier growth bounds (and in fact we are unaware
of any natural class of Boolean functions satisfying these closure
conditions and a $\levelk[1](\F)\leq t$ bound or coin theorem $\eps
\cdot O(t)$ which does not at least conjecturally also have Fourier
growth with base $O(t)$).  We hope that this result may help point the
way to giving some additional constraints under which one could hope for
a $\levelk[1]$ or coin problem bound to imply a general $\levelk$ bound.

\section{Low bias and level 1}

The main technical result of this section is the following simple
proposition.
\begin{prop}\label{prop:fourier_sum_bound}
	For every $f:\pmone^n \to \mathbb R$, the rate of convergence of the limit
	\[
		\lim_{\eps\to0} \frac{1}{\eps}\of[\Big]{\fdiff} = \sum_{i=1}^n
		\fourierset{f}{i}
	\]
	can be bounded for every $\abs{\eps} \leq 1/\sqrt{n}$ as
	\[
		\abs{\frac{1}{\eps}\of[\Big]{\fdiff} - \sum_{i=1}^n
		\fourierset{f}{i}} \leq \abseps\cdot n \cdot
		\sqrt{\Var\of{f\of{\coins[0]}}}
		\,,
	\]
	where in particular if $f:\pmone^n \to [-1, 1]$ then the right-hand
	side is bounded by $\abseps \cdot n$.
\end{prop}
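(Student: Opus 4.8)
The plan is to evaluate both expectations exactly via the Fourier expansion of $f$. Expanding $f$ as $\sum_S \fourier f S\prod_{i\in S}x_i$ and using linearity of expectation together with independence of the coordinates of $\coins[\delta]$ (each of which has expectation $\delta$), we get $\E{f\of{\coins[\delta]}} = \sum_{S\subseteq[n]}\fourier f S\,\delta^{\abs S}$; in particular $\E{f\of{\coins[0]}} = \fourier f\emptyset$. Subtracting and dividing by $\eps\neq 0$ gives the exact identity
\[
	\frac{1}{\eps}\of[\Big]{\fdiff} \;=\; \sum_{k=1}^n \eps^{k-1}\sum_{\abs S = k}\fourier f S \;=\; \sum_{i=1}^n\fourierset f i \;+\; \sum_{k=2}^n\eps^{k-1}\sum_{\abs S = k}\fourier f S .
\]
The stated limit is then immediate (the final sum is a polynomial in $\eps$ vanishing at $\eps=0$), and the quantity to be bounded is exactly the absolute value of that final ``tail'' sum.

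To bound the tail I would factor out $1/\eps$ and apply Cauchy--Schwarz over all sets of size at least $2$:
\[
	\abs{\sum_{k=2}^n\eps^{k-1}\sum_{\abs S = k}\fourier f S} \;\leq\; \frac{1}{\abseps}\sum_{\abs S\geq 2}\abseps^{\abs S}\abs{\fourier f S} \;\leq\; \frac{1}{\abseps}\of[\Big]{\sum_{\abs S\geq 2}\eps^{2\abs S}}^{1/2}\of[\Big]{\sum_{\abs S\geq 2}\fourier f S^2}^{1/2}.
\]
By Parseval, $\sum_{S\neq\emptyset}\fourier f S^2 = \Var\of{f\of{\coins[0]}}$, so the second factor is at most $\sqrt{\Var\of{f\of{\coins[0]}}}$, and by the binomial theorem the first factor equals $\bigl((1+\eps^2)^n - 1 - n\eps^2\bigr)^{1/2}$.

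It then remains to verify the elementary inequality $(1+\eps^2)^n - 1 - n\eps^2 \leq \eps^4 n^2$ whenever $\abseps \leq 1/\sqrt n$; this is the only point at which the hypothesis on $\eps$ enters, and it is essentially the whole content of the estimate. Expanding, $(1+\eps^2)^n - 1 - n\eps^2 = \sum_{k=2}^n\binom n k\eps^{2k}$, and for each $k\geq 2$ one has $\binom n k\eps^{2k}\leq \frac{n^k}{k!}\cdot\eps^4\cdot\eps^{2(k-2)}\leq \frac{n^k}{k!}\cdot\eps^4\cdot n^{-(k-2)} = \frac{\eps^4 n^2}{k!}$, using $\binom n k\leq n^k/k!$ and $\eps^2\leq 1/n$; summing over $k\geq 2$ and using $\sum_{k\geq 2}1/k! = e-2 < 1$ gives the claim. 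Substituting back, the tail is at most $\frac{1}{\abseps}\cdot\sqrt{\eps^4 n^2}\cdot\sqrt{\Var\of{f\of{\coins[0]}}} = \abseps\cdot n\cdot\sqrt{\Var\of{f\of{\coins[0]}}}$, as required; and when $f:\pmone^n\to[-1,1]$ we have $\Var\of{f\of{\coins[0]}}\leq \E{f\of{\coins[0]}^2}\leq 1$, so the bound becomes $\abseps\cdot n$. I do not anticipate any real obstacle here: the only care needed is bookkeeping the two powers of $\eps$ (one lost to the division by $\eps$, one still owed in the target bound) against the $\eps$-dependence of $(1+\eps^2)^n - 1 - n\eps^2$.
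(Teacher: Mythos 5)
Your proposal is correct and follows essentially the same route as the paper: the exact Fourier identity for $\E{f(\coins[\delta])}$, Cauchy--Schwarz against the squared Fourier weights (your single application over all $\abs S\geq 2$ produces exactly the same intermediate bound $\frac1\abseps\sqrt{(1+\eps^2)^n-1-n\eps^2}\cdot\sqrt{\Var\of{f\of{\coins[0]}}}$ as the paper's two-step Cauchy--Schwarz), Parseval, and the elementary estimate $(1+\eps^2)^n-1-n\eps^2\leq n^2\eps^4$ for $n\eps^2\leq 1$. The only cosmetic difference is that you verify this last inequality termwise via $\binom nk\leq n^k/k!$ rather than via $(1+\eps^2)^n\leq e^{n\eps^2}$ as in the paper; both are fine.
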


In particular, bounds on the coin problem for $f$ for small $\eps$
(e.g.~$\eps = o(1/n)$) are equivalent to bounds on the sum of the level
$1$ Fourier coefficients of $f$. We give the straightforward proof of (a
slight strengthening of) \cref{prop:fourier_sum_bound} at the end of
this section, but we first note some immediate corollaries, including
some simple new proofs of known results in the literature on Fourier
growth.

Most basically, if we can assume that the level $1$ Fourier coefficients
of $f$ are non-negative, we get upper bounds on the $\ell_1$ norm of
these coefficients:
\begin{cor}\label{cor:nonnegative_fn}
	Let $f:\pmone^n\to [-1,1]$ be such that $\fourierset{f}{i} \geq 0$ for
	every $1\leq i \leq n$ (e.g.~$f$ is monotone). Then
	\begin{align*}
		\levelk[1]\of{f} =
		\sum_{i=1}^n \abs{\fourierset{f}{i}} \leq \inf_{\abs{\eps} \leq
			1/\sqrt{n}}\set{ \abs{\frac{1}{\eps}\of[\Big]{\fdiff}}  + \abs{\eps} n}
	\end{align*}
	In particular, if $f$ is monotone, then its total influence $\ti{f} =
	\levelk[1]\of{f}$ satisfies the same bound.
\end{cor}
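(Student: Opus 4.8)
The plan is to read this off almost immediately from \cref{prop:fourier_sum_bound}, of which it is essentially a reformulation once the signs of the level~$1$ coefficients are pinned down. First, since $\fourierset{f}{i} \geq 0$ for every $i$ --- which holds when $f$ is monotone, as then $\fourierset{f}{i} = \E[x]{f(x)x_i} \geq 0$ --- one has $\levelk[1]\of{f} = \sum_{i=1}^n \abs{\fourierset{f}{i}} = \sum_{i=1}^n \fourierset{f}{i}$, so it is enough to upper bound this sum.

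Next, fix any nonzero $\eps$ with $\abs{\eps} \leq 1/\sqrt n$. Invoking \cref{prop:fourier_sum_bound}, and using that its right-hand side is at most $\abseps\cdot n$ because $f$ is $[-1,1]$-valued, gives
\[
	\abs{\frac{1}{\eps}\of[\Big]{\fdiff} - \sum_{i=1}^n \fourierset{f}{i}} \leq \abseps \cdot n,
\]
so that by the triangle inequality
\[
	\sum_{i=1}^n \fourierset{f}{i} \leq \abs{\frac{1}{\eps}\of[\Big]{\fdiff}} + \abseps \cdot n.
\]
Since this holds for every such $\eps$, taking the infimum over $\abs{\eps}\leq 1/\sqrt n$ yields the claimed inequality; one may safely include the limiting value $\eps \to 0$ in the infimum as well, since by \cref{prop:fourier_sum_bound} the right-hand side then tends to $\sum_{i=1}^n \fourierset{f}{i}$ itself.

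Finally, for the ``in particular'' clause, \cref{defn:fourier_expansion} records that a monotone $f:\pmone^n \to \pmone$ satisfies $\ti{f} = \sum_{i=1}^n \fourierset{f}{i} = \levelk[1]\of{f}$, so the bound just proved for $\levelk[1]\of f$ is simultaneously a bound on $\ti f$. There is no real difficulty here beyond \cref{prop:fourier_sum_bound} itself; the only points needing a moment's attention are that a $[-1,1]$-valued function has variance at most $1$ (so that the simplified right-hand side of \cref{prop:fourier_sum_bound} applies) and that in the monotone case one is working with $\pmone$-valued functions, so that the total-influence identity of \cref{defn:fourier_expansion} is in force.
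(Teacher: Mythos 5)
Your proof is correct and follows the intended route: the paper treats this as an immediate corollary of \cref{prop:fourier_sum_bound} (non-negativity turns $\sum_i\abs{\fourierset{f}{i}}$ into $\sum_i\fourierset{f}{i}$, the triangle inequality plus the $\abseps\cdot n$ bound gives the inequality for each admissible $\eps$, and the monotone total-influence identity gives the last clause), which is exactly what you do. No gaps; your remarks about the variance bound and the $\pmone$-valued case for $\ti{f}$ are the right points to flag.
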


Beyond monotonicity, another way to establish such a bound is if $f$ is
part of a larger class of functions with basic closure properties that
all simultaneously satisfy a coin theorem.
\begin{cor}\label{cor:coin_class}
	Let $\F$ be a class of functions $f:\pmone^n \to [-1, 1]$ such that
	\begin{enumerate}
		\item $\F$ satisfies a coin theorem with bound $\beta(n, \eps)$, meaning
			$\abs{\fdiff} \leq \beta(n, \eps)$ for every $f\in \F$.
		\item For every $f\in \F$ there exists $g\in \F$ such that
			$\fourierset{g}{i} = \abs{\fourierset{f}{i}}$ for every $1\leq i
			\leq n$ (e.g.~$\F$ is closed under negation of input variables, or
			consists only of monotone functions).
	\end{enumerate}
	Then
	\[
		\levelk[1]\of{\F} =
		\sup_{f\in \F}\sum_{i=1}^n \abs{\fourierset{f}{i}} \leq \inf_{\abs{\eps} \leq
			1/\sqrt{n}}\set{ \abs{\frac{\beta(n,\eps)}{\eps}} + \abs{\eps} n}
	\]
	and in particular, every monotone $f\in \F$ has $\ti{f} =
	\levelk[1]\of{f}$ at most the above.
\end{cor}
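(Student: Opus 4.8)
The plan is to derive \cref{cor:coin_class} from \cref{cor:nonnegative_fn} by using hypothesis~2 to pass from an arbitrary $f\in\F$ to a function in $\F$ whose level-$1$ Fourier coefficients are all non-negative, and then using hypothesis~1 to control \emph{that} function's coin-distinguishing advantage. The only point requiring any care is that the coin theorem must be applied to the auxiliary function rather than to $f$ itself, which is exactly why hypothesis~1 is phrased as a property of the whole class $\F$.

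Concretely, fix an arbitrary $f\in\F$ and let $g\in\F$ be the function guaranteed by hypothesis~2, so that $\fourierset{g}{i}=\abs{\fourierset{f}{i}}\geq 0$ for every $1\leq i\leq n$. Since $g$ also maps $\pmone^n$ into $[-1,1]$ and has non-negative level-$1$ coefficients, \cref{cor:nonnegative_fn} applied to $g$ gives, for every $\abs{\eps}\leq 1/\sqrt n$,
\[
	\sum_{i=1}^n \abs{\fourierset{f}{i}} \;=\; \sum_{i=1}^n \fourierset{g}{i}
	\;\leq\; \abs{\frac{1}{\eps}\of[\big]{\E{g\of{\coins[\eps]}}-\E{g\of{\coins[0]}}}} + \abs{\eps}\, n .
\]
Now invoke hypothesis~1 for $g$ (legitimate since $g\in\F$): the coin theorem gives $\abs{\E{g\of{\coins[\eps]}}-\E{g\of{\coins[0]}}}\leq\beta(n,\eps)$, so the first term above is at most $\abs{\beta(n,\eps)/\eps}$. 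The resulting bound $\abs{\beta(n,\eps)/\eps}+\abs{\eps}\,n$ does not depend on $f$ or $g$, so taking the supremum over $f\in\F$ followed by the infimum over $\abs{\eps}\leq 1/\sqrt n$ yields
\[
	\levelk[1]\of{\F} \;=\; \sup_{f\in\F}\sum_{i=1}^n \abs{\fourierset{f}{i}} \;\leq\; \inf_{\abs{\eps}\leq 1/\sqrt n}\set{\abs{\beta(n,\eps)/\eps} + \abs{\eps}\,n},
\]
as claimed. For the final assertion, if $f\in\F$ is monotone then $\ti{f}=\sum_{i=1}^n\fourierset{f}{i}=\levelk[1]\of{f}$ by \cref{defn:fourier_expansion}, so the same bound applies verbatim to $\ti{f}$.

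I do not anticipate a real obstacle: once \cref{cor:nonnegative_fn} is in hand the argument is just bookkeeping, and the one genuinely substantive point --- that $g$ and $f$ may differ (e.g.~when $\F$ is closed under negating input variables, $g$ is obtained by flipping the signs of the coordinates $i$ on which $\fourierset{f}{i}<0$) --- is precisely handled by requiring the coin theorem to hold for all of $\F$. If one preferred to avoid citing \cref{cor:nonnegative_fn}, the same short argument goes through by applying \cref{prop:fourier_sum_bound} directly to $g$ together with the bound $\sqrt{\Var\of{g\of{\coins[0]}}}\leq 1$ for $[-1,1]$-valued $g$.
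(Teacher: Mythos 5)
Your proposal is correct and follows exactly the route the paper intends: \cref{cor:coin_class} is stated as an immediate consequence of \cref{prop:fourier_sum_bound} (equivalently \cref{cor:nonnegative_fn}), obtained by applying that bound to the auxiliary function $g\in\F$ with non-negative level-$1$ coefficients and then invoking the coin theorem for $g$. Nothing is missing; the handling of the monotone case via $\ti{f}=\levelk[1](f)$ is also as in the paper.
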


Note that taking $\eps = 1/n$ in \cref{cor:coin_class} gives the
simple upper bound of $n \cdot \abs{\beta(n, 1/n)} + 1$. To show the
applicability of this result, we show how it can be used to give simple
proofs of several existing results in the literature. As it is not
relevant for us, we will not give formal definitions of the classes $\F$
involved and defer to the original papers for such details.

For constant-width read-once branching programs, Brody and Verbin
\cite{bro_ver_coin_2010} first claimed a coin theorem, which was
improved by Steinberger \cite{steinberger_distinguishability_2013} to
give an optimal bound. Interestingly, Steinberger also separately proved
(using the same techniques) a total influence bound on monotone
constant-width read-once branching programs, which was generalized
(again using the same techniques) by Steinke, Vadhan, and Wan
\cite[\S6]{ste_vad_wan_pseudorandomness_2017} in 2014 to a corresponding
level $1$ Fourier bound for (non-necessarily monotone) constant-width
read-once branching programs. Applying Corollary \ref{cor:coin_class},
we see that these latter results can in fact be derived (up to
constant factors) using Steinberger's coin theorem as a black box.

\begin{cor}[\cite{steinberger_distinguishability_2013,ste_vad_wan_pseudorandomness_2017}]\label{coin:robp}
	Let $f:\pmone^n \to \pmone$ be computable by a width-$w$ read-once
	branching program. Then $\levelk[1](f) \leq O_w(\log n)^{w-2} + O(1)$.
	In particular, if $f$ is monotone then $f$ has total influence $\ti{f}
	= \levelk[1](f)$ satisfying the same bound.
\end{cor}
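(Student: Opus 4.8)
The plan is to obtain this as an immediate consequence of \cref{cor:coin_class}, taking Steinberger's coin theorem for constant-width read-once branching programs \cite{steinberger_distinguishability_2013} as the black box. Let $\F_w$ denote the class of all functions $\pmone^n \to \pmone$ computable by a width-$w$ read-once branching program. The relevant form of Steinberger's result is that $\F_w$ solves the $\eps$-coin problem with advantage $\beta(n,\eps) \le \eps\cdot O_w(\log n)^{w-2}$, at least for all $\abseps \le 1/\sqrt n$; this is hypothesis~1 of \cref{cor:coin_class}.

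Next I would check that $\F_w$ satisfies hypothesis~2, i.e.\ that it is closed under negating input variables. Given a width-$w$ read-once branching program computing $f$, swapping the two outgoing edges of every node in the layer querying $x_i$ gives a width-$w$ read-once branching program for $x \mapsto f(x_1,\dots,-x_i,\dots,x_n)$, whose Fourier coefficients are those of $f$ with a sign flip on every set containing $i$; in particular the level-$1$ coefficient on $\set{i}$ becomes $-\fourierset{f}{i}$, while the others are unchanged. Negating precisely those variables $x_i$ with $\fourierset{f}{i} < 0$ thus produces a $g \in \F_w$ with $\fourierset{g}{i} = \abs{\fourierset{f}{i}}$ for all $i$, as hypothesis~2 requires. \cref{cor:coin_class} then yields, for every $f \in \F_w$,
\[
	\levelk[1]\of{f} \le \levelk[1]\of{\F_w} \le \inf_{\abseps \le 1/\sqrt n}\set{\abs{\frac{\beta(n,\eps)}{\eps}} + \abseps\cdot n},
\]
and substituting $\eps = 1/n$ together with the bound on $\beta$ collapses the right-hand side to $O_w(\log n)^{w-2} + 1$. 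Finally, for monotone $f$ the total influence satisfies $\ti{f} = \sum_{i=1}^n \fourierset{f}{i} = \levelk[1]\of{f}$ by \cref{defn:fourier_expansion}, so it obeys the same bound.

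The only real content of this argument is quoting Steinberger's coin theorem in the precise linear-in-$\eps$ form above and confirming that it remains valid down at $\eps = 1/n$, rather than only in the larger-$\eps$ regime in which coin theorems are sometimes phrased; granting that, the rest is the one-line substitution $\eps = 1/n$ into \cref{cor:coin_class}, exactly as noted in the remark preceding it. A secondary but entirely routine point is that $\F_w$ is really a family indexed by the input length $n$ whereas \cref{cor:coin_class} is stated for fixed $n$; since both Steinberger's bound and the closure property hold uniformly in $n$, this is harmless.
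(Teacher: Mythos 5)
You take the same route as the paper: feed Steinberger's coin theorem into \cref{cor:coin_class} at $\eps = 1/n$, after checking that width-$w$ read-once branching programs are closed under negating input variables (your edge-swapping argument is fine and matches the closure fact the paper invokes), and then use $\ti{f}=\levelk[1](f)$ for monotone $f$. However, the one step you explicitly ``grant'' is essentially the entire content of the paper's proof. Steinberger's theorem is not stated in the clean form $\beta(n,\eps) \leq \eps\cdot O_w(\log n)^{w-2}$; it says that for every integer $r\geq 1$, $\abs{\fdiff} \leq \eps r^{w-2} + \of{n + r^{w-2}}\of{w-2}\of{\frac{1}{2-\eps}}^{r-1}$. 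The paper substitutes this parameterized bound directly into \cref{cor:coin_class}, sets $\eps = 1/n$, and chooses $r = \ceil{4\log n}+1$ so that the second term, after division by $\eps$, contributes only $o(1)$, yielding $\levelk[1](f) \leq O_w(\log n)^{w-2} + O(1)$. To complete your argument you need to carry out this short optimization over $r$ (or cite a statement that already has the linear form). Note also that your blanket claim that the linear bound holds for \emph{all} $\abseps \leq 1/\sqrt{n}$ is stronger than what Steinberger actually gives: as the paper remarks immediately after its proof, his bound is suboptimal in the regime $\eps = n^{-\omega(1)}$ (which is exactly what \cref{cor:improvecoinverysmall} later repairs). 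This does not hurt you here, since only the single value $\eps = 1/n$ is used, but as written your quotation of the black box is not literally correct.
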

\begin{proof}
	Steinberger's full coin theorem \cite[Full version, Corollary
	1]{steinberger_distinguishability_2013} states that for every integer
	$r\geq 1$ that
	\[
		\abs{\fdiff} \leq \eps r^{w-2} + \of{n + r^{w-2}}\of{w-2}\of{\frac 1{2-\eps}}^{r-1}
	\]
	so since width-$w$ read-once branching programs are closed under
	negation of input variables, \cref{cor:coin_class} implies that for $0
	< \eps \leq 1/\sqrt{n}$ that
	\[
		\levelk[1]\of{f} \leq r^{w-2} + \frac{1}{\eps}\of{n + r^{w-2}}\of{w-2}\of{\frac 1{2-\eps}}^{r-1}
		 + \eps n
	\]
	so setting $\eps = 1/n$ and taking $r = \ceil{4\log n} + 1$ gives the
	result.
\end{proof}

One thing to note about the above proof is that the coin theorem we used
of \cite{steinberger_distinguishability_2013} was suboptimal in the
range $\eps = n^{-\omega(1)}$, but still implied the optimal level $1$
bound of \cite{ste_vad_wan_pseudorandomness_2017}.  Using
\cref{prop:fourier_sum_bound}, we can also improve the
\cite{steinberger_distinguishability_2013} coin theorem for small $\eps$
by setting $\eps_0 = 1/n$ in the following corollary:
\begin{cor}
	\label{cor:improvecoinverysmall}
	Let $f:\pmone^n \to [-1, 1]$ satisfy a coin theorem of $\abseps\cdot
	B$ for all $\abseps \geq \eps_0$, where $B\geq 0$ and $\eps_0 \leq
	1/\sqrt{n}$.  Then for all $\abseps < \eps_0$, it holds that
	\[
		\abs{\fdiff} \leq \abseps\cdot \of[\big]{B + n(\abseps + \eps_0)}
		\leq \abseps\cdot \of{B + 2n\cdot \eps_0}.
	\]
	In particular, $f$ satisfies a coin theorem with bound $\abseps \cdot
	(B+2n\cdot \eps_0)$ for all $\abseps \leq 1$.
\end{cor}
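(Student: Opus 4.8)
The plan is to invoke \cref{prop:fourier_sum_bound} twice. Writing $L \defeq \sum_{i=1}^n \fourierset{f}{i}$, the first application (at bias $\eps_0$) converts the hypothesized coin theorem into a bound on $\abs{L}$, and the second application (at each bias $\abseps < \eps_0$) converts that bound on $\abs{L}$ back into a coin theorem at the small bias. Since $f$ is $[-1,1]$-valued we have $\Var\of{f\of{\coins[0]}} \leq \E{f\of{\coins[0]}^2} \leq 1$, so we may always replace $\sqrt{\Var\of{f\of{\coins[0]}}}$ by $1$ in \cref{prop:fourier_sum_bound}.

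First I would apply \cref{prop:fourier_sum_bound} with $\eps = \eps_0$, which is permitted because $\eps_0 \leq 1/\sqrt n$. Rearranging its conclusion and using the hypothesized coin theorem at bias $\eps_0$ (valid since $\eps_0 \geq \eps_0$) gives
\[
	\abs{L} \leq \frac{1}{\eps_0}\abs{\E{f\of{\coins[\eps_0]}} - \E{f\of{\coins[0]}}} + \eps_0 n \leq \frac{\eps_0 B}{\eps_0} + \eps_0 n = B + \eps_0 n.
\]

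Then I would fix any $\eps$ with $\abseps < \eps_0$ (so also $\abseps < 1/\sqrt n$) and apply \cref{prop:fourier_sum_bound} a second time, now reading off a bound on $\abs{\fdiff}$ directly:
\[
	\abs{\fdiff} \leq \abseps\abs{L} + \abseps^2 n \leq \abseps(B + \eps_0 n) + \abseps^2 n = \abseps\of[\big]{B + n(\abseps + \eps_0)},
\]
and since $\abseps < \eps_0$ this is at most $\abseps(B + 2n\eps_0)$, giving both displayed inequalities of the corollary. For the final ``in particular'' statement, I would combine this with the trivial observation that for $\abseps \geq \eps_0$ the hypothesis already gives $\abs{\fdiff} \leq \abseps B \leq \abseps(B + 2n\eps_0)$; as $\eps_0 \leq 1/\sqrt n$, these two cases cover every bias, in particular all $\abseps \leq 1$.

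The argument is routine and I do not anticipate a genuine obstacle; the only points needing care are that the hypothesis $\eps_0 \leq 1/\sqrt n$ is exactly what licenses the first application of \cref{prop:fourier_sum_bound} at bias $\eps_0$, and the variance estimate $\Var\of{f\of{\coins[0]}} \leq 1$. One could instead carry the factor $\sqrt{\Var\of{f\of{\coins[0]}}}$ through both applications to obtain a marginally sharper bound, but this complicates the statement, so I would present only the clean version above.
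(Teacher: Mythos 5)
Your proposal is correct and is essentially the paper's proof: the paper likewise applies \cref{prop:fourier_sum_bound} at bias $\eps_0$ to obtain $\abs{\sum_{i=1}^n \fourierset{f}{i}} \leq B + \eps_0 n$, and then cites \cref{cor:l1coinsmalleps} with $t = B + \eps_0 n$, which is exactly your second application of the proposition written out inline. The only cosmetic difference is that you inline that corollary's one-line computation rather than invoking it.
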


The proof of \cref{cor:improvecoinverysmall} goes by using
\cref{prop:fourier_sum_bound} applied with $\eps_0$ to derive a bound on
the sum of the level $1$ Fourier coefficients of $f$, then applying the
following converse of \cref{cor:coin_class} to derive a coin theorem
for small $\eps$ from such a bound:
\begin{cor}\label{cor:l1coinsmalleps}
	Let $f:\pmone^n \to [-1,1]$ have $\abs{\sum_{i=1}^n \fourierset fi
	}\leq t$ (e.g.~$\levelk[1](f) \leq t$).  Then for all $\abseps \leq
	1/\sqrt{n}$,
	\[
		\abs{\fdiff} \leq \abseps\cdot\of{t + \abseps \cdot n},
	\]
	so that in particular $f$ satisfies a coin theorem with bound $2t\cdot
	\abseps$ for all $\abseps \leq t/n$.
\end{cor}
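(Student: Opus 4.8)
The plan is to obtain this as an immediate consequence of \cref{prop:fourier_sum_bound}. First I would invoke that proposition in its form for $[-1,1]$-valued functions, where $\sqrt{\Var\of{f\of{\coins[0]}}} \leq 1$, so that for every $\abseps \leq 1/\sqrt{n}$ it gives $\abs{\frac{1}{\eps}\of[\Big]{\fdiff} - \sum_{i=1}^n \fourierset{f}{i}} \leq \abseps\cdot n$. Multiplying this inequality through by $\abseps$ yields
\[
	\abs{\of[\Big]{\fdiff} - \eps\sum_{i=1}^n \fourierset{f}{i}} \leq \eps^2 \cdot n .
\]

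Next I would apply the triangle inequality together with the hypothesis $\abs{\sum_{i=1}^n \fourierset{f}{i}} \leq t$ (which in particular holds when $\levelk[1](f) \leq t$, since $\abs{\sum_i \fourierset fi} \leq \sum_i \abs{\fourierset fi}$) to conclude
\[
	\abs{\fdiff} \leq \abseps\cdot\abs{\sum_{i=1}^n \fourierset{f}{i}} + \eps^2 n \leq \abseps\cdot t + \eps^2 n = \abseps\cdot\of{t + \abseps\cdot n},
\]
which is the first displayed bound. For the ``in particular'' clause I would note that when $\abseps \leq t/n$ we have $\abseps\cdot n \leq t$, hence $t + \abseps n \leq 2t$ and the bound becomes $\abs{\fdiff} \leq 2t\cdot\abseps$; this argument is valid for all such $\abseps$ that additionally satisfy $\abseps \leq 1/\sqrt n$, and in the remaining case $t > \sqrt n$ (so that $t/n > 1/\sqrt n$) the claimed bound $2t\cdot\abseps > 2$ is trivially true for $\abseps \in (1/\sqrt n, t/n]$ since $f$ is $[-1,1]$-valued. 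So the coin theorem holds on the full range $\abseps \leq t/n$.

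There is essentially no obstacle here: all of the analytic content sits in \cref{prop:fourier_sum_bound}, and this corollary is just a rearrangement of its conclusion plus the triangle inequality. The only mild point to be careful about is reconciling the range $\abseps \leq 1/\sqrt n$ coming from the proposition with the range $\abseps \leq t/n$ in the statement, which as indicated above is handled by the trivial $\ell_\infty$ bound of $2$ on the advantage outside the overlap.
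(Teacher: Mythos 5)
Your proposal is correct and follows essentially the same route as the paper's proof: apply \cref{prop:fourier_sum_bound} with the bound $\sqrt{\Var\of{f\of{\coins[0]}}}\leq 1$, then use the triangle inequality to split off $\abs{\sum_i \fourierset fi}\leq t$. Your extra remark handling the range $1/\sqrt n < \abseps \leq t/n$ (when $t>\sqrt n$) via the trivial advantage bound of $2$ is a fine bit of added care that the paper leaves implicit.
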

\begin{proof}
	The triangle inequality and \cref{prop:fourier_sum_bound} imply that
	\begin{align*}
		\abs*{\fdiff[\eps]}
					&= \abseps \cdot \abs{\frac{1}{\eps}\cdot \of{\fdiff}}\\
		&\leq
		\abs{\eps}\cdot \of{ \abs{\frac{1}{\eps}\of{\fdiff[\eps]}-\sum_{i=1}^n\fourierset{f}{i}}
		+ \abs{\sum_{i=1}^n\fourierset{f}{i}}}
			\\
		&\leq \abs{\eps}\cdot \of{\abs{\eps}\cdot  n +  t}
	\end{align*}
	as desired.
\end{proof}
\begin{proof}[Proof of \cref{cor:improvecoinverysmall}]
	By the triangle inequality and \cref{prop:fourier_sum_bound} we have
	\begin{align*}
		\abs{\sum_{i=1}^n \fourierset fi}&\leq \abs{\sum_{i=1}^n \fourierset fi - \frac1{\eps_0}\of{\fdiff[\eps_0]}} + \abs{\frac{1}{\eps_0}\of{\fdiff[\eps_0]}} \\&\leq \abs{\eps_0}\cdot n + B,
	\end{align*}
	so the result follows by applying \cref{cor:l1coinsmalleps} with $t = B + \abs{\eps_0}\cdot n$.
\end{proof}

It remains to prove \cref{prop:fourier_sum_bound}, for which we will
need just one simple fact about the Fourier expansion.
\begin{lem}[Parseval's identity]
	The Fourier coefficients of $f:\pmone^n \to \RR$ satisfy
	\[
		\sum_{S\subseteq [n]} \fourier{f}{S}^2 = \E{f\of{\coins[0]}^2}.
	\]
	In particular, $\Var\of{f\of{\coins[0]}} = \sum_{\abs S \geq
	1}\fourier{f}{S}^2$.
\end{lem}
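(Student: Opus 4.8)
The plan is simply to invoke the orthonormality of the parity functions $\prod_{i\in S} x_i$ under the uniform distribution; Parseval's identity then follows from one line of bookkeeping. This is of course completely standard (see e.g.\ \cite{odonnell_analysis_2014}), but I would spell out the steps for completeness.

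First I would establish that for any $S,T\subseteq[n]$,
\[
	\E[x]{\of[\Big]{\prod_{i\in S} x_i}\of[\Big]{\prod_{i\in T} x_i}} = \E[x]{\prod_{i\in S\triangle T} x_i},
\]
using that $x_i^2 = 1$ for $x_i\in\pmone$, so the coordinates in $S\cap T$ cancel. Since the coordinates of a uniform $x\in\pmone^n$ are independent with $\E{x_i}=0$, the right-hand side factors as $\prod_{i\in S\triangle T}\E{x_i}$, which equals $1$ when $S\triangle T=\emptyset$, i.e.\ $S=T$, and $0$ otherwise.

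Next I would expand the square of the Fourier expansion from \cref{defn:fourier_expansion}:
\[
	f(x)^2 = \sum_{S\subseteq[n]}\sum_{T\subseteq[n]}\fourier{f}{S}\fourier{f}{T}\of[\Big]{\prod_{i\in S} x_i}\of[\Big]{\prod_{i\in T} x_i}.
\]
Taking the expectation over a uniform $x$ — which is exactly $\E{f\of{\coins[0]}^2}$ since $\coins[0]$ is uniform on $\pmone^n$ — and applying the orthonormality relation above annihilates every term with $S\neq T$, leaving $\E{f\of{\coins[0]}^2} = \sum_{S\subseteq[n]}\fourier{f}{S}^2$, which is the first identity. For the ``in particular'' claim, recall from \cref{defn:fourier_expansion} that $\E{f\of{\coins[0]}} = \fourier{f}{\emptyset}$, so
\[
	\Var\of{f\of{\coins[0]}} = \E{f\of{\coins[0]}^2} - \E{f\of{\coins[0]}}^2 = \sum_{S\subseteq[n]}\fourier{f}{S}^2 - \fourier{f}{\emptyset}^2 = \sum_{\abs{S}\geq1}\fourier{f}{S}^2.
\]

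There is no real obstacle here: the only substantive point is the identity $x_i^2 = 1$ on $\pmone$, which makes products of parities again parities and is what drives the whole computation; everything else is linearity of expectation together with the independence and mean-zero property of the coordinates of a uniform point of $\pmone^n$.
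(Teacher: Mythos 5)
Your proof is correct: the orthonormality of the parity functions under the uniform distribution, applied to the squared Fourier expansion, is exactly the standard argument, and the variance claim follows from $\widehat{f}(\emptyset)$ being the mean as recorded in the paper's definition. The paper states this lemma without proof (treating it as a standard fact from O'Donnell's book), so your write-up simply supplies the canonical derivation it implicitly relies on; there is nothing to compare beyond that.
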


\begin{prop}[Strengthened version of \cref{prop:fourier_sum_bound}]
\label{prop:fourier_sum_strengthened}
	For every $f:\pmone^n \to \mathbb R$, we have for every
	$\abs\eps\leq 1/\sqrt n$ that
	\[
		\abs{\frac{1}{\eps}\of[\Big]{\fdiff} - \sum_{i=1}^n
		\fourierset{f}{i}} \leq \abseps\cdot n \cdot
		\sqrt{\sum_{\abs S\geq 2}\fourier fS^2}
		\leq\abs\eps\cdot n\cdot \sqrt{\Var\of{f\of{\coins[0]}}}
		\,.
	\]
	Furthermore, for every $\abs{\eps}\leq 1/(4\sqrt n)$ we
	have the stronger bound that
	\[
		\abs{\frac{1}{\eps}\of[\Big]{\fdiff} - \sum_{i=1}^n
		\fourierset{f}{i}} \leq \abseps\cdot n \cdot
		\sqrt{\max_{k\geq 2}\sum_{\abs S=k}\fourier fS^2}
		\,.
	\]
\end{prop}
\begin{proof}
	Let $f:\pmone^n \to \mathbb R$. Then by the multilinearity of the
	Fourier expansion, we have
	\begin{align*}
		\fdiff
		&= \sum_{S\subseteq [n]}\fourier{f}{S} \of{\E[x\sim
		{\coins}]{\prod_{i\in S} x_i} - \E[x\sim {\coins[0]}]{\prod_{i\in
		S}x_i}}
		= \sum_{\emptyset \neq S \subseteq [n]} \fourier{f}{S} \cdot
		\eps^{\abs{S}}\,.
	\end{align*}
	Rearranging gives
	\[
		\frac 1\eps \of[\Big]{\fdiff} - \sum_{i=1}^n \fourierset{f}{i}
		=	\frac{1}{\eps}\sum_{k=2}^n \eps^k \sum_{\abs{S} = k}\fourier{f}{S}\,,
	\]
	so by the triangle inequality and Cauchy--Schwarz we get
	\begin{align*}
		\abs{\frac 1\eps \of[\Big]{\fdiff} - \sum_{i=1}^n \fourierset{f}{i}}
		&\leq \frac 1\abseps\sum_{k=2}^n \abseps^k \sqrt{\binom{n}{k}\sum_{\abs{S} = k}\fourier{f}{S}^2}
		\,.
	\end{align*}
	We show two different techniques to bound this sum, in terms of either
	$\sum_{\abs S \geq 2}\fourier{f}{S}^2\leq \Var\of{f\of{\coins[0]}}$ or
	$\max_{k\geq 2} \sum_{\abs S = k}\fourier{f}{S}^2$. For the former, we have
	\begin{align*}
		\frac 1\abseps\sum_{k=2}^n \abseps^k&\sqrt{\binom{n}{k}\sum_{\abs{S} = k}\fourier{f}{S}^2}\\
		&=\frac 1\abseps\sum_{k=2}^n \of{\abseps^k\sqrt{\binom{n}{k}}}\cdot\of{\sqrt{\sum_{\abs{S} = k}\fourier{f}{S}^2}}\\
		&\leq \frac1\abseps \sqrt{\sum_{k=2}^n \binom{n}{k}\cdot \eps^{2k}}\cdot \sqrt{\sum_{\abs S \geq 2} \fourier fS^2}
		&&\text{(by Cauchy-Schwarz)}
	\end{align*}
	We conclude by noting that
	$\sum_{k=2}^n \binom{n}{k}\cdot \eps^{2k}=(1 + \eps^2)^n - (1 + n\eps^2) \leq e^{n\eps^2} - (1 + n\eps^2)
	\leq (n\eps^2)^2$ for $n\eps^2 \leq 1$.

	For the other bound, we have since $\binom nk\leq n^k/2$ for $k\geq 2$ that
	\[
			\frac 1\abseps\sum_{k=2}^n \abseps^k \sqrt{\binom{n}{k}\sum_{\abs{S} = k}\fourier{f}{S}^2}
			\leq \frac 1{\sqrt 2\cdot \abseps}\cdot \sum_{k=2}^n \of{\abseps\sqrt{n}}^k\cdot
			\sqrt{\max_{k\geq 2}\sum_{\abs S = k}\fourier{f}{S}^2}
	\]
	where 
	\[
		\frac 1{\sqrt 2\cdot \abseps}\cdot \sum_{k=2}^n \of{\abseps\sqrt{n}}^k
			= \frac 1{\sqrt 2\cdot \abseps} \cdot \frac{\abseps^2\cdot n - \abseps^{n+1}\sqrt{n}^{n+1}}{1 - \abseps\sqrt{n}}
			\leq \frac{\abseps\cdot  n}{\sqrt 2\cdot(1 - \abseps\sqrt{n})}
	\]
	is at most $\abseps\cdot n$ when $\abseps\leq (1-1/\sqrt 2)/\sqrt n<1/(4\sqrt n)$
	as desired.
\end{proof}

\section{Larger bias and beyond Level 1}

The previous section demonstrated that bounds on the level 1 Fourier
coefficients are essentially equivalent to coin theorems for
inverse-polynomially small error $\eps = o(1/n)$. This raises
two natural questions: can we say anything about either coin theorems
for larger $\eps$, or about bounds on the Fourier spectrum beyond
level $1$?

These questions are of interest because, perhaps surprisingly, many
natural classes of Boolean functions $\F$ for which we know level-$1$
bounds are also known (or conjectured) to satisfy corresponding Fourier
growth bounds $\levelk(\F) \leq O\of{\levelk[1](\F)}^k$ for all $k$. For
example, $\mathsf{AC^0}$ (Tal \cite{tal_tight_2017}) and the class of
product tests (Lee \cite{lee_fourier_2019}) are known to have this
property, and constant-width read-once branching programs (cwROBPs) are
believed to (Chattopadhyay, Hatami, Reingold, and Tal
\cite{cha_hat_rei_tal_improved_2018} explicitly make this conjecture and
prove almost this optimal result). Furthermore, these classes all have
known corresponding coin theorems which are not only capable of proving
the known $\levelk[1]$ bound via \cref{cor:coin_class}, but are also
valid for all $\eps = O(1/\levelk[1](\F))$ (Cohen, Ganor, and Raz
\cite{coh_gan_raz_two_2014} for $\mathsf{AC^0}$, Lee and Viola
\cite{lee_vio_coin_2018} for product tests, and Steinberger
\cite{steinberger_distinguishability_2013} for cwROBPs). One might
therefore hope that there is a stronger relationship between Fourier
growth and coin theorems.

One direction, that Fourier growth bounds of this form imply coin
theorems, is well-known (see e.g.~\cite{tal_tight_2017}, this can also
be seen in the proof of \cref{prop:fourier_sum_bound} by replacing the
first Cauchy-Schwarz step), so the goal of this section is to explore
the possibility of a converse. Generally, allowing for both additive and
multiplicative losses, one might ask something like the following:
\begin{ques}\label{conj:meta}
	Is there a ``natural'' set of conditions $\mathcal C$ such that the
	following is true:
	Let $\F = \of{\F_n}_{n\in \mathbb N}$ be a class of Boolean functions
	$f_n:\pmone^n\to \pmone$ satisfying $\mathcal C$.
	Then if $\F$ satisfies a coin theorem with bound
	$\abseps\cdot B(n)$, meaning for all $n\in \mathbb N$, $f_n\in \F_n$
	and $\abseps \leq 1$ it holds that $\abs{\E{f_n\of{\coins[\eps]}} -
	\E{f_n\of{\coins[0]}}} \leq \abseps \cdot B(n)$, then
	there exists a constant $c_{\F}$ such that
	\[
		\levelk(\F) \leq \of{c_{\F}\cdot \of{1 + B(n)}}^k
	\]
	for all $k$.
\end{ques}

Perhaps the most natural condition to impose, beyond closure under
negations as considered in the previous section, is closure under
\emph{restriction}, that is, fixing parts of the input, since
intuitively this has the property of ``reducing the level'' of any
Fourier coefficient containing one of the fixed inputs. Furthermore, all
the classes of Boolean functions mentioned earlier in this section are
closed under restriction, and we are not aware of any natural class of
Boolean functions with these properties which does not (at least
conjecturally) satisfy a corresponding $\levelk$ bound.  However, Tal
\cite{tal_private_2019} recently gave evidence suggesting that this is
not enough, showing that any $\F$ closed under restriction satisfies a
coin theorem of $\eps\cdot O(\levelk[1](\F))$ for $\eps =
O(1/\levelk[1](\F))$, so that for such $\F$ there is essentially no
difference between coin theorems for $\eps = o(1/n)$ and $\eps =
O(1/\levelk[1](\F))$.

\begin{lem}[\cite{tal_private_2019}\footnote{\label{foot:thanks}We thank Avishay Tal for telling us about this result and
	allowing us to include it and its proof in this note.}]\label{lem:restrictfl1givescoin}
	Let $\F$ be a class of Boolean functions which is closed under
	restriction and satisfies $\abs{\sum_{i=1}^n \fourierset{f}{i}} \leq t$
	for every $f\in \F$ (e.g.~$\levelk[1](\F) \leq t$). Then for all
	$\abseps < 1$ it holds that
	\[
		\abs{\fdiff} \leq \ln\of{\frac1{1-\eps}} \cdot t.
	\]
	In particular, $\F$ satisfies a coin theorem of $\abseps \cdot O(t)$
	for all $\abseps = \min(1/t,0.99)$.
\end{lem}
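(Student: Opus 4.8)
The plan is to bound $\fdiff$ by passing to the derivative-style identity already used in the proof of \cref{prop:fourier_sum_bound}, namely $\E{f(\coins[\delta])} - \E{f(\coins[0])} = \sum_{\emptyset \neq S} \fourier{f}{S}\,\delta^{\abs S}$, and then integrating a level-$1$ bound that holds not just for $f$ but for every restriction of $f$. Concretely, since $\F$ is closed under restriction, the hypothesis $\abs{\sum_i \fourier{g}{i}} \le t$ applies to every restriction $g$ of every $f \in \F$. The key observation is that the derivative of $\delta \mapsto \E{f(\coins[\delta])}$ at a point $\delta$ can itself be written as a $\delta$-biased average of sums-of-level-$1$-coefficients of restrictions of $f$: differentiating $\sum_S \fourier{f}{S}\delta^{\abs S}$ gives $\sum_{S \ni \text{something}} \abs{S}\fourier{f}{S}\delta^{\abs S - 1}$, and a standard manipulation (the ``derivative of a noise/bias operator'' identity) rewrites $\frac{d}{d\delta}\E{f(\coins[\delta])} = \sum_{i=1}^n \E{\partial_i f(\coins[\delta])}$, where $\partial_i f$ is the discrete derivative; and $\E{\partial_i f(\coins[\delta])}$ is in turn an average over restrictions $\rho$ of the $i$-th level-$1$ coefficient... more cleanly, $\sum_i \E{\partial_i f(\coins[\delta])}$ equals $\frac{1}{1-\delta^2}$ or a similar factor times an average, over a random $\delta$-biased restriction of a random coordinate, of $\sum_j \fourier{f_\rho}{j}$. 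I would pin down the exact identity so that $\bigl|\frac{d}{d\delta}\E{f(\coins[\delta])}\bigr| \le \frac{t}{1-\abs\delta}$ for $\abs\delta < 1$.

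Granting that pointwise derivative bound, the lemma follows by the fundamental theorem of calculus:
\[
	\abs{\fdiff} = \abs[\Big]{\int_0^{\eps} \frac{d}{d\delta}\E{f(\coins[\delta])}\, d\delta} \le \int_0^{\abseps} \frac{t}{1-\delta}\, d\delta = \ln\of[\Big]{\frac{1}{1-\abseps}}\cdot t,
\]
which is exactly the claimed bound. The ``in particular'' clause is then immediate: for $\abseps \le 1/t$ (when $t \ge 1$, so $1/t \le 1$) we have $\ln(1/(1-\abseps)) \le \abseps/(1-\abseps) \le \abseps/(1-1/t)$, and capping at $\abseps \le 0.99$ makes $1/(1-\abseps)$ a constant, so $\ln(1/(1-\abseps)) \cdot t = O(\abseps \cdot t)$; a uniform constant works across the stated range.

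The main obstacle is establishing the pointwise derivative inequality $\bigl|\frac{d}{d\delta}\E{f(\coins[\delta])}\bigr| \le t/(1-\abs\delta)$ from the restriction hypothesis — i.e.\ getting the right combinatorial identity expressing this derivative as a (sub-)average of level-$1$ sums over restrictions, with the correct $1/(1-\abs\delta)$ weight. The cleanest route is probably to compute $\frac{d}{d\delta}\E{f(\coins[\delta])}$ directly from $\sum_{\emptyset\neq S}\fourier{f}{S}\delta^{\abs S}$, group terms by a distinguished element $i \in S$, and recognize $\sum_{S \ni i}\fourier{f}{S}\delta^{\abs S - 1} = \sum_{\rho}\Pr_\delta[\rho]\cdot(\text{level-}1\text{ coeff of }f_\rho\text{ at }i)$ for $\rho$ a $\delta$-biased restriction of $[n]\setminus\{i\}$, the geometric-series factor $1/(1-\delta)$ (or $1/(1-\delta^2)$, which I should check) arising from summing $\delta^{\abs T}$ over all $T$. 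One should double-check whether the correct weight is $1/(1-\abs\delta)$ or something like $1/(1-\delta^2)$ — the latter would only improve the bound — and handle the sign of $\eps$ by working with $\abseps$ throughout, noting the expression $\fdiff$ is an odd-degree-inclusive polynomial in $\eps$ so no parity subtlety obstructs the integration.
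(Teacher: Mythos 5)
Your proposal is correct and follows essentially the same route as the paper's proof (Tal's argument via the Chattopadhyay--Hatami--Hosseini--Lovett biased-restriction decomposition): one shows $\abs{\frac{d}{d\delta}\E{f\of{\coins[\delta]}}} \leq t/(1-\abs{\delta})$ and integrates, and the identity you wanted to pin down is exactly $\frac{d}{d\delta}\E{f\of{\coins[\delta]}} = \frac{1}{1-\abs{\delta}}\E[\rho]{\sum_{i=1}^n \fourierset{f|_\rho}{i}}$ for a single random restriction $\rho$ fixing each coordinate independently to $\mathrm{sign}(\delta)$ with probability $\abs{\delta}$, the factor $1/(1-\abs{\delta})$ arising from conditioning on a coordinate surviving, so your guess of $1/(1-\abs{\delta})$ (not $1/(1-\delta^2)$) is the right one. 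The only point to make explicit is that the restriction must be common to all coordinates $i$, so that the hypothesis bounds the signed sum $\abs{\sum_i \fourierset{f|_\rho}{i}} \leq t$ inside one expectation (individual level-$1$ coefficients are not bounded by the hypothesis); with that, your integration and the handling of the ``in particular'' clause match the paper.
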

As an example of the power of this result, note that it implies the coin
theorem for $\mathsf{AC^0}$ of Cohen, Ganor, and Raz
\cite{coh_gan_raz_two_2014}  as a corollary of Boppana's
\cite{boppana_average_1997} bound on the total influence of that class.
\begin{cor}[\cite{coh_gan_raz_two_2014}]
	Let $f:\pmone^n \to \pmone$ be computable by a size $s$, depth $d$
	Boolean circuit and $\abseps \leq 1$. Then $\abs{\fdiff} \leq \abseps
	\cdot O_d(\log^{d-1}(s))$.
\end{cor}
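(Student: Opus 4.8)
The plan is to instantiate \cref{lem:restrictfl1givescoin} with $\F = \mathsf{AC^0}$, supplying the level-$1$ hypothesis from Boppana's total influence bound \cite{boppana_average_1997} rather than from an explicit Fourier estimate. Fix the size and depth parameters $s,d$ and let $\F$ be the class of all functions $g:\pmone^m\to\pmone$ (over all $m$) computable by Boolean circuits of size at most $s$ and depth at most $d$. Two facts hold for $\F$: it is closed under restriction, since hardwiring an input bit and propagating constants through a size-$\le s$, depth-$\le d$ circuit again produces such a circuit on the surviving variables; and, by Boppana, every $g\in\F$ has total influence $\ti{g}\le O_d\of{\log^{d-1}s} =: t$, a bound independent of $m$ (and we may assume $t\ge 1$, and also $s\ge 2$, else $f$ is constant).

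To invoke \cref{lem:restrictfl1givescoin} I need $\abs{\sum_i \fourierset{g}{i}}\le t$ for every $g\in\F$, and for this I would use the elementary fact that $\levelk[1](g)\le \ti{g}$ for \emph{every} Boolean $g$ — this is the one point where the argument needs something not already recorded in the paper, which states the identity $\ti{g}=\levelk[1](g)$ only in the monotone case. Writing $\mathrm{D}_i g(x) = \frac12\of{g(x^{i\mapsto 1}) - g(x^{i\mapsto -1})}$, one has $\fourierset{g}{i} = \E[x]{\mathrm{D}_i g(x)}$, and since $g$ is $\pmone$-valued, $\mathrm{D}_i g(x)\in\set{-1,0,1}$, so $\abs{\mathrm{D}_i g} = (\mathrm{D}_i g)^2$ pointwise; hence $\abs{\fourierset{g}{i}}\le \E[x]{(\mathrm{D}_i g)^2}$, and summing over $i$ (identifying $\sum_i\E[x]{(\mathrm{D}_i g)^2}$ with $\ti{g}$ via Parseval) gives $\levelk[1](g) = \sum_i \abs{\fourierset{g}{i}}\le \ti{g}\le t$, so in particular $\abs{\sum_i \fourierset{g}{i}}\le t$.

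With this, \cref{lem:restrictfl1givescoin} gives $\abs{\fdiff}\le \ln\of{\frac1{1-\abseps}}\cdot t$ for all $\abseps < 1$. Since $x\mapsto \ln(1/(1-x))/x$ is increasing on $(0,1)$, for $\abseps\le 0.99$ this is at most $\frac{\ln 100}{0.99}\cdot\abseps\cdot t = \abseps\cdot O_d\of{\log^{d-1}s}$; and for $0.99 < \abseps\le 1$ the trivial bound $\abs{\fdiff}\le 2$ (as $f$ is $\pmone$-valued), together with $t\ge 1$, already gives $\abs{\fdiff}\le \abseps\cdot O_d\of{\log^{d-1}s}$ once the $d$-dependent constant is taken large enough. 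Combining the two ranges yields the claim for all $\abseps\le 1$.

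There is no genuine obstacle here: the only content beyond routine bookkeeping is the inequality $\levelk[1](g)\le\ti{g}$ for non-necessarily-monotone Boolean $g$ via the $\mathrm{D}_i g\in\set{-1,0,1}$ observation, and the rest (closure of $\mathsf{AC^0}$ under restriction, quoting Boppana and \cref{lem:restrictfl1givescoin}, and the elementary estimate on $\ln(1/(1-\abseps))$ to patch the range $\abseps$ near $1$) is mechanical.
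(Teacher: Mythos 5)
Your proof is correct and is essentially the argument the paper intends: the corollary is stated there without proof precisely as the combination of \cref{lem:restrictfl1givescoin} (with $\F$ the restriction-closed class of size-$s$, depth-$d$ circuits) and Boppana's total influence bound, exactly as you do. Your explicit verification of the standard inequality $\levelk[1](g)\leq \ti{g}$ for general Boolean $g$ and the patching of the range $\abseps$ near $1$ are routine details the paper leaves implicit.
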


We include the proof\cref{foot:thanks} of
\cref{lem:restrictfl1givescoin} at the end of this section, but we will
first use Tal's result to provide a simple proof that there exist
classes of Boolean functions $\F$ which are closed under restriction and
negations of input variables and satisfy a coin theorem and level $1$
bound of $B(n)$, but have $\levelk[3](\F) \geq \Omega(n\cdot B(n))$,
thereby showing that these properties themselves are not enough to give
a positive answer to \cref{conj:meta}.
\begin{prop}\label{prop:highkcounterexample}
	For every function $\sqrt{\log n} + O(1) < B(n) < \sqrt{n}$ and
	sufficiently large odd integer $n$, there is a class $\F_B$ of Boolean
	functions on at most $n$ bits that is closed under restriction,
	negation of input variables, and negation of outputs such that
	$\levelk[1](\F) \leq B$ and $\F$ satisfies a coin theorem of $\abseps
	\cdot O(B)$ for all $\eps = O(1/B)$, but $\levelk[3](\F) = \Omega(B
	\cdot n)$.
\end{prop}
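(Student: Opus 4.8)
The plan is to build $\F_B$ around a single well-chosen Boolean function $h$ (together with all of its restrictions and input/output negations) whose level-$1$ $\ell_1$ mass is small but whose level-$3$ $\ell_1$ mass is large, and then \emph{get the coin theorem for free} by invoking \cref{lem:restrictfl1givescoin}. The point is that since $\F_B$ will be closed under restriction by construction, Tal's lemma converts the level-$1$ bound $\levelk[1](\F_B)\le B$ directly into the claimed coin theorem $\abseps\cdot O(B)$ for all $\eps = O(1/B)$; so the \emph{only} things we actually have to verify are (i) $\levelk[1](\F_B)\le B$ holds for the base function \emph{and all its restrictions simultaneously}, and (ii) $\levelk[3](h) = \Omega(Bn)$ for the base function itself. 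This is exactly the asymmetry we want to exploit: a level-$1$ bound must survive restriction, but a level-$3$ \emph{lower} bound need only hold for one function in the class.

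The construction I would use is a ``tribes-like'' or bent-function-flavored gadget on roughly $n$ of the variables, tuned so that its Fourier mass is concentrated on level $3$. Concretely, partition (a subset of) $[n]$ into about $n/3$ disjoint triples, and on each triple $T_j$ put a degree-$3$ term; a clean choice is to take $h$ to be (a $\pm1$-valued function close to) $\frac{c}{\sqrt{n/3}}\sum_j x_{T_j}$ where $x_{T_j} = \prod_{i\in T_j}x_i$ — i.e.\ the ``level-3 analogue'' of a first-level-heavy function, rounded to Boolean if necessary, or more robustly a Boolean function all of whose nonzero Fourier coefficients sit at level $3$ with roughly equal magnitude (such functions exist for appropriate parameters; e.g.\ compositions of an inner parity-on-triples map with an outer first-level-heavy Boolean function like a threshold/majority on $n/3$ bits). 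With $\Theta(n)$ triples each contributing a level-$3$ coefficient of magnitude $\Theta(B/n)$, one gets $\levelk[3](h) = \Theta(n)\cdot\Theta(B/n)\cdot(\text{something})$; the numerics should be arranged so this lands at $\Omega(Bn)$, using that $B<\sqrt n$ keeps the individual coefficients small enough to be realizable while $B>\sqrt{\log n}+O(1)$ gives enough room.

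The key step, and the main obstacle, is (i): showing $\levelk[1](\rho h)\le B$ for \emph{every} restriction $\rho$, not just for $h$ itself. Restricting a degree-$3$ term $\prod_{i\in T_j}x_i$ by fixing one or two of its three variables collapses it to a degree-$2$ or degree-$1$ monomial — so restrictions of $h$ do acquire level-$1$ mass, and we must control it. If $h$ has level-$3$ coefficients of magnitude $O(B/n)$ on $\Theta(n)$ triples, then any restriction can create at most one level-$1$ monomial per triple, each of magnitude $O(B/n)$, so crudely $\levelk[1](\rho h) \le \Theta(n)\cdot O(B/n) = O(B)$ — but one has to be careful that many triples don't all feed \emph{the same} singleton $\{i\}$ with aligned signs; disjointness of the triples prevents this (each variable $i$ lies in at most one triple, hence receives a level-$1$ contribution from at most one term), which is the structural reason the disjoint-triples design works. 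I would also need to check level-$1$ mass is not created by the rounding-to-Boolean step or the outer Boolean function, which is why taking a composition of ``parity on disjoint triples'' with an outer function whose \emph{own} Fourier spectrum is level-$1$-light (and whose degree-$k$ mass is controlled) is cleaner than ad hoc rounding. Finally, closure under output negation and input-variable negation is immediate since these operations only change signs of Fourier coefficients, preserving all $\ell_1$ norms, and one includes all such variants explicitly in $\F_B$.
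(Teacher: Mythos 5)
Your top-level framing is the same as the paper's (build a restriction-closed class, prove only the level-$1$ bound for all restrictions and the level-$3$ lower bound for the base function, and get the coin theorem for free from \cref{lem:restrictfl1givescoin}), but the construction you propose cannot work, and the gap is quantitative rather than cosmetic. A function supported on $\Theta(n)$ disjoint triples has at most $n/3$ nonzero level-$3$ coefficients, so by Cauchy--Schwarz and Parseval its level-$3$ $\ell_1$ mass is at most $\sqrt{(n/3)\cdot\sum_S \fourier{h}{S}^2}\le \sqrt{n/3}=O(\sqrt n)$, which is far below the required $\Omega(B\cdot n)\ge \Omega(n\sqrt{\log n})$; indeed your own arithmetic gives $\Theta(n)\cdot\Theta(B/n)=\Theta(B)$, and the unexplained ``$(\text{something})$'' factor cannot rescue it. To reach $\levelk[3]=\Omega(Bn)$ one needs $\Omega(B^2n^2)$ nonzero level-$3$ coefficients, i.e.\ a \emph{dense} level-$3$ spectrum, which is exactly why the paper starts from majority (whose level-$3$ mass is $\Theta(n^{3/2})$ spread over all $\binom n3$ sets) scaled by $B/\sqrt n$. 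Your fallback construction, parity-on-disjoint-triples composed with majority on $n/3$ bits, fails on both sides: its level-$3$ mass equals the level-$1$ mass of $\mathrm{Maj}_{n/3}$, i.e.\ only $\Theta(\sqrt n)$, and restricting two variables in each triple produces $\pm\mathrm{Maj}_{n/3}$ on singletons, whose level-$1$ mass is $\Theta(\sqrt n)\gg B$ when $B$ is near its lower limit $\sqrt{\log n}+O(1)$, violating the level-$1$ bound under restriction.

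You also flag but do not resolve the rounding issue, and this is where the paper's actual technical content lies: since the natural witness $B/\sqrt n\cdot\mathrm{Maj}_n$ is not Boolean, the paper rounds it \emph{randomly} and proves a Hoeffding-type concentration statement (\cref{lem:roudingfouriersumconcentration}) showing that sums of Fourier coefficients over any family $\mathcal T$ are preserved up to $\eps$ except with probability $2\exp(-2^{m-1}\eps^2/\abs{\mathcal T})$. This single lemma simultaneously preserves the $\Theta(Bn)$ level-$3$ mass and, via a union bound over the $2^{O(n)}$ restrictions (with restrictions leaving $m\le B^2$ live variables handled trivially by $\sqrt m\le B$), keeps every restriction's level-$1$ sum below $B+1$; the hypothesis $B>\sqrt{\log n}+O(1)$ is precisely what makes $2^{m-1}/m$ beat the union bound. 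Without some mechanism of this kind your ``rounded to Boolean if necessary'' step is a genuine missing piece, and the deterministic rounding of $\frac{c}{\sqrt{n/3}}\sum_j x_{T_j}$ to its sign just reproduces the composition construction shown above to fail.
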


The idea is that it is easy to construct such a family $\F$ if we
consider functions $f:\pmone^n\to [-1, 1]$ which are not Boolean but
instead take on values inside the interval: in particular, if $f$ takes
on the values $\set{\pm B(n)/\sqrt{n}}$, then $\levelk[1](f)$ will be
bounded by $B(n)$ and so $f$ will satisfy a coin theorem of bound
$\abseps \cdot O(B(n))$, but $f$ need not satisfy higher $\levelk$
bounds of the form $O(B(n))^k$. By randomly rounding such a function $f$
to have Boolean outputs, the resulting function (and its closure under
restriction and negation of input variables) will with high probability
still satisfy a coin theorem but will keep its large $\levelk$ mass.
Thus, this suggests that any positive answer to \cref{conj:meta} will
require a condition which is in some sense ``not linear'' and can detect
such bad examples. Formally, \cref{prop:highkcounterexample} follows
from the following lemma which shows that the above process indeed
preserves the sums of Fourier coefficients with high probability.

\begin{lem}\label{lem:roudingfouriersumconcentration}
	Let $m$ be a positive integer and $g:\pmone^m \to [-1, 1]$ be a
	function. Then defining $\randg:\pmone^m \to \pmone$
	as the random function which for each $x\in\pmone^m$ independently
	sets $\randg(x)\in \pmone$ to have expectation $g(x)$, it holds for
	every collection $\mathcal T \subseteq 2^{[m]}$ of subsets of $[m]$
	that
	\[
		\PR{\abs{\sum_{S\in \mathcal T}\fourier{\randg}{S} - \sum_{S\in \mathcal T}\fourier{g}{S}}\geq\eps}
		\leq
		2\exp\of{-2^{m - 1}\eps^2/\abs{\mathcal T}}
	\]
\end{lem}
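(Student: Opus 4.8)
The plan is to express the quantity $\sum_{S\in\mathcal T}\fourier{\randg}{S}$ as a sum of independent bounded random variables, one for each input $x\in\pmone^m$, and then apply a Hoeffding/Azuma-type concentration inequality. Recall from \cref{defn:fourier_expansion} that $\fourier{\randg}{S} = \E[x]{\randg(x)\prod_{i\in S}x_i} = 2^{-m}\sum_{x\in\pmone^m}\randg(x)\prod_{i\in S}x_i$, where the expectation is over uniform $x$. Summing over $S\in\mathcal T$ and exchanging the order of summation gives
\[
	\sum_{S\in\mathcal T}\fourier{\randg}{S} = 2^{-m}\sum_{x\in\pmone^m}\randg(x)\cdot c(x),
	\qquad\text{where } c(x) \defeq \sum_{S\in\mathcal T}\prod_{i\in S}x_i.
\]
Since $|\prod_{i\in S}x_i| = 1$ for every $S$, we have $|c(x)|\leq |\mathcal T|$ for all $x$. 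The same identity holds with $g$ in place of $\randg$, so the difference is $2^{-m}\sum_{x}(\randg(x)-g(x))c(x)$.

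Now the key point is that the random variables $\set{\randg(x)}_{x\in\pmone^m}$ are mutually independent by hypothesis, and $\E{\randg(x)} = g(x)$, so $Z_x \defeq 2^{-m}(\randg(x)-g(x))c(x)$ are independent, mean-zero random variables, and $\sum_{S\in\mathcal T}\fourier{\randg}{S} - \sum_{S\in\mathcal T}\fourier{g}{S} = \sum_{x}Z_x$. To bound the range of each $Z_x$: since $\randg(x)\in\pmone$ and $g(x)\in[-1,1]$, we have $|\randg(x)-g(x)|\leq 2$, hence $Z_x$ lies in an interval of length at most $2^{-m}\cdot 2\cdot 2\cdot|c(x)| \leq 2^{-m+2}|\mathcal T|$ (in fact, since $\randg(x)-g(x)$ ranges over an interval of width $1-g(x)^2 \le$ ... ; the crude bound $|\randg(x)-g(x)|\le 2$, i.e.\ width $\le 4$ after accounting for the two endpoints relative to $g(x)$, suffices). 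By Hoeffding's inequality,
\[
	\PR{\abs{\sum_{x\in\pmone^m}Z_x}\geq\eps}
	\leq 2\exp\of{-\frac{2\eps^2}{\sum_{x\in\pmone^m}(\text{width of } Z_x)^2}}
	\leq 2\exp\of{-\frac{2\eps^2}{2^m\cdot(2^{-m+2}|\mathcal T|)^2}}
	= 2\exp\of{-\frac{\eps^2 2^{m-3}}{|\mathcal T|^2}},
\]
which is in the right ballpark; matching the stated bound $2\exp(-2^{m-1}\eps^2/|\mathcal T|)$ exactly requires being careful about constants, and crucially seems to need $|\mathcal T|$ in the denominator rather than $|\mathcal T|^2$.

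The main obstacle is therefore this gap between $|\mathcal T|$ and $|\mathcal T|^2$: a naive bound $|c(x)|\leq|\mathcal T|$ applied pointwise is too lossy. The fix is to use a second-moment (energy) bound on $c(x)$ instead of the sup bound: by Parseval, $2^{-m}\sum_{x}c(x)^2 = \sum_{S\in\mathcal T}1 = |\mathcal T|$, so the $c(x)$ are small \emph{on average} even though individual values may be as large as $|\mathcal T|$. Rather than Hoeffding (which only sees the worst-case range), I would apply a Bernstein-type inequality: the sum $\sum_x Z_x$ has variance $\sum_x \Var(Z_x) \le 2^{-2m}\sum_x (1-g(x)^2) c(x)^2 \le 2^{-2m}\sum_x c(x)^2 = 2^{-m}|\mathcal T|$, using that $\Var(\randg(x)) = 1-g(x)^2 \le 1$; combined with the almost-sure bound $|Z_x|\le 2^{-m+1}|\mathcal T|$ on each term, Bernstein's inequality gives a tail of the form $2\exp(-\eps^2/(2v + \tfrac23 M\eps))$ with $v = 2^{-m}|\mathcal T|$ and $M = 2^{-m+1}|\mathcal T|$; for the relevant regime $\eps = O(1)$ the variance term $2v = 2^{-m+1}|\mathcal T|$ dominates, yielding $2\exp(-\Omega(2^m\eps^2/|\mathcal T|))$, which matches the claimed $2\exp(-2^{m-1}\eps^2/|\mathcal T|)$ up to the absolute constant. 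I would double-check the precise constant by tracking $\Var(\randg(x)) = 1-g(x)^2$ honestly through Bernstein rather than bounding it by $1$ prematurely, and note that the looser Hoeffding-style version with $|\mathcal T|^2$ in the exponent would still suffice for the application in \cref{prop:highkcounterexample} if matching the exact constant proves fiddly.
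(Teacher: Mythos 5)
You have the paper's decomposition and its key identity exactly: writing the deviation as $\sum_{x}Z_x$ with $Z_x = 2^{-m}\of{\randg(x)-g(x)}\cdot c(x)$ and $c(x)=\sum_{S\in\mathcal T}\prod_{i\in S}x_i$, and then computing $2^{-m}\sum_{x}c(x)^2=\abs{\mathcal T}$ (you cite Parseval; the paper derives the same identity by noting that distinct monomials are pairwise uncorrelated, treating $\emptyset\in\mathcal T$ separately). The gap is in the final concentration step. You invoke Hoeffding only in its uniform-range form, which forces the lossy pointwise bound $\abs{c(x)}\le\abs{\mathcal T}$, and then switch to Bernstein to exploit the second moment. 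But Hoeffding's inequality already permits a different range for each summand: $Z_x$ lies in an interval of length $2^{-m}\cdot 2\cdot\abs{c(x)}$, so
\[
	\PR{\abs{\sum_{x\in\pmone^m} Z_x}\geq \eps}
	\leq 2\exp\of{-\frac{2\eps^2}{\sum_{x}\of{2^{-m+1}\abs{c(x)}}^2}}
	= 2\exp\of{-\frac{2^{m-1}\eps^2}{\abs{\mathcal T}}},
\]
which is the stated bound exactly; this is precisely the paper's proof, and it removes the constant-chasing you flag at the end.

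The Bernstein detour, as written, does not prove the lemma: besides losing the absolute constant, once $\eps\gtrsim 1$ the term $\tfrac23 M\eps$ dominates and Bernstein only gives a tail of the form $\exp\of{-\Omega\of{2^{m}\eps/\abs{\mathcal T}}}$, strictly weaker than the claimed sub-Gaussian tail $\exp\of{-2^{m-1}\eps^2/\abs{\mathcal T}}$ throughout the nontrivial range $1\ll\eps\lesssim\sqrt{\abs{\mathcal T}}$. Your remark that ``the relevant regime is $\eps=O(1)$'' is also not quite right: the application in \cref{prop:highkcounterexample} invokes the lemma with $\eps=n$ and $\abs{\mathcal T}=\binom{n}{3}$ for the level-$3$ bound (there your weaker tail, like the cruder $\abs{\mathcal T}^2$ Hoeffding bound, does still happen to give $\exp(-\omega(n))$ and so would survive the union bound). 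So your argument yields a usable but weaker statement, not the lemma as stated; the fix is simply to keep the $x$-dependent ranges inside Hoeffding rather than reaching for Bernstein.
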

\begin{proof}
	We first write
	\begin{align*}
		\sum_{S\in \mathcal T}\fourier{\randg}{S}
		- \sum_{S\in \mathcal T}\fourier{g}{S}
		&=
		2^{-m} \sum_{x\in \pmone^m} \of{\randg(x)
		-  g(x)}\cdot\of{\sum_{S\in \mathcal T}\prod_{i\in S} x_i}.
	\end{align*}
	by definition of Fourier coefficients. Then by definition of $\randg$,
	the right hand side is a sum of $2^m$ independent mean-zero random
	variables, one for each $x\in \pmone^n$, bounded in a range of size
	$2^{-m}\cdot 2 \cdot \abs{\sum_{S\in \mathcal T}\prod_{i\in S} x_i}$,
	where the last term depends on $x$. Thus, Hoeffding's inequality
	\cite{hoeffding_probability_1963} implies the concentration bound
	\begin{align*}
		\PR{\abs{\sum_{S\in \mathcal T}\fourier{\randg}{S} - \sum_{S\in \mathcal T}\fourier{g}{S}}\geq\eps}
		&\leq
		2\exp\of{-\frac{2 \cdot \eps^2}{\sum_{x\in \pmone^m}\of{2 \cdot 2^{-m}\cdot\abs{\sum_{S\in \mathcal T}\prod_{i\in S}x_i}}^2}}\\
		&=
		2\exp\of{-\frac{2^{m - 1}\cdot \eps^2}{2^{-m}\sum_{x\in \pmone^m} \of{\sum_{S\in \mathcal T}\prod_{i\in S}x_i}^2}}
	\end{align*}
	It remains to show that the denominator is equal to $\abs{\mathcal
	T}$: first note that this sum can be written as $\E[x\sim
	{\coinset0m}]{\of{\sum_{S\in \mathcal T}\prod_{i\in S}x_i}^2}$ where
	the $x_i$ are distributed as iid random signs. Note that for $S\neq
	\emptyset$ we have that $\prod_{i\in S}x_i$ has mean zero and is
	marginally distributed as a random sign, and for $S = \emptyset$ we
	have that $\prod_{i\in S}x_i = 1$, so that if $\emptyset\in \mathcal
	T$ we can write $\E[x\sim {\coinset0m}]{\of{\sum_{S\in \mathcal
	T}\prod_{i\in S}x_i}^2} = 1 + \E[x\sim {\coinset0m}]{\of{\sum_{S\in
	\mathcal T\setminus \set{\emptyset}}\prod_{i\in S}x_i}^2}$ with
	$\abs{\mathcal T\setminus \set{\emptyset}} = \abs{\mathcal T}-1$, and
	thus it suffices to consider the case that $\mathcal T$ does not
	contain the empty set.

	In this case, we have that $\E[x\sim {\coins[0]}]{\of{\sum_{S\in
	\mathcal T}\prod_{i\in S}x_i}^2}$ is the variance of a sum of
	$\abs{\mathcal T}$ random variables each marginally distributed as a
	random sign. Since all the terms are distinct, given $S\neq S'\in
	\mathcal T$ we know there exists some $j$ in the symmetric difference
	of $S$ and $S'$ (without loss of generality in $S$), and thus the
	covariance of $\prod_{i\in S}x_i$ and $\prod_{i\in S'}x_i$ is zero, as
	$x_j$ has mean zero even conditioned on the value of $\prod_{i\in
	S'}x_i$.  Hence, these variables are all uncorrelated, and so the
	variance of their sum is simply the sum of the variances, which
	is $\abs{\mathcal T}\cdot 1 = \abs{\mathcal T}$ as desired.
\end{proof}

Applying \cref{lem:roudingfouriersumconcentration} to the majority function
proves \cref{prop:highkcounterexample}.
\begin{proof}[Proof of \cref{prop:highkcounterexample}]
	Let $f:\pmone^n \to \pmone$ be the majority function on $n$ bits for
	$n$ odd, and
	$\randf:\pmone^n\to \pmone$ be the random function which independently
	for each $x\in \pmone^n$ sets $\randf(x)$  with expectation $(B/\sqrt n)
	\cdot f(x)$. Then define $\randF$ to be the set of all restrictions of
	all functions of the form $z \mapsto \tau \cdot \randf(z_1\sigma_1,
	\dots, z_n\sigma_n)$ for $\tau \in \pmone$ and $\sigma \in \pmone^n$.

	We claim that with positive probability $\randF$ has the desired
	properties. Note that $\randF$ is closed under restriction, negation
	of input variables, and negation of the output by definition.  To
	prove the coin theorem claim, by Tal's result
	(\cref{lem:restrictfl1givescoin}) it is enough to prove the level $1$
	bound $\levelk[1](\randF) \leq B$.  Thus, we need to show that with
	positive probability $\levelk[1](\randF) \leq B$ and
	$\levelk[3](\randF) \geq \Omega(B\cdot n)$ (we will in fact show an
	upper bound of $B + 1=O(B)$, which is equivalent by shifting).

	Since $\levelk[3](f) = \Theta(n^{3/2})$ (see e.g.~\cite[Problem
	5.26]{odonnell_analysis_2014}), we have $\levelk[3]\of[\big]{(B/\sqrt n) \cdot f}
	= \Theta(B\cdot n)$, so applying
	\cref{lem:roudingfouriersumconcentration} to $(B/\sqrt n) \cdot f$ and
	$\mathcal T = \set{S\subseteq [n]\given \abs S = 3}$ implies that
	\[\PR{\levelk[3](\randF) \geq \levelk[3](\randf) \geq
	\levelk[3](B/\sqrt n \cdot f) - n = \Omega(B\cdot n)} \geq 1 -
	\exp\of{-\Omega(2^n/n^3)}.\]

	\newcommand{\restf}{\randf|_{\rho}^{\sigma,\tau}}
	For the level one bound, since $\randF$ is closed under negation of
	input variables, it suffices to bound the sum of the level $1$ Fourier
	coefficients of each member of $\randF$. For this, consider a fixed
	sign $\tau\in \pmone$, sign pattern $\sigma \in \pmone^n$, and
	restriction $\rho$ of $z \mapsto \tau \cdot
	\randf(z_1\sigma_1, \dots, z_n\sigma_n)$ keeping $m$ variables
	alive, which we denote $\restf$. By Parseval and Cauchy-Schwarz, the sum of the level $1$
	Fourier coefficients of any Boolean function on $m$ variables is at
	most $\sqrt m$, so if $\sqrt m \leq B$ then
	$\levelk[1](\restf) \leq B$ with probability $1$. Thus,
	assume without loss of generality that $m > B^2$.
	Define $g:\pmone^m \to \set{\pm B/\sqrt{n}}$ be given by
	$g(z_1, \dots, z_m) = (B/\sqrt{n})\cdot \tau \cdot f(x_1, \dots, x_n)$
	where $x_i$ is equal to $\sigma_i\cdot \rho(i)$ if $i$ is fixed by $R$, and equal to
	$\sigma_j \cdot z_j$ for $j$ the index of $i$ in the free coordinates
	otherwise. Then the sum of the singleton Fourier coefficients of the
	restriction $\restf$ is distributed exactly as the sum of the
	Fourier coefficients $\randg:\pmone^m \to \pmone$ where $\randg(x)$ is
	set independently for each $x\in \pmone^m$ with expectation $g(x)$.
	Since the sum of $g$'s singleton Fourier coefficients is at most
	$(B/\sqrt n) \cdot \sqrt m \leq B$, by
	\cref{lem:roudingfouriersumconcentration} we have
	\[
		\PR{\levelk[1](\restf) \geq B + 1}
		\leq 2\exp\of{-2^{m - 1}/m}.
	\]
	Now, by assumption $m \geq B^2 \geq \log n + O(\sqrt{\log n})$, so
	since $m \mapsto 2^m/m$ is increasing in $m$ for $m\geq 1/\ln 2$, this
	probability is at most $\exp(-\omega(n))$.  Thus, since there are
	$2^{O(n)}$ functions $\restf$, we conclude by a union bound.
	\let\restf\undefined
\end{proof}

It remains to give the proof of \cref{lem:restrictfl1givescoin}.  For
intuition, writing $g(\eps)=\E{f\of{\coins}}$, recall that in
\cref{prop:fourier_sum_bound} we argued that for arbitrary $f$ and small
$\eps$, the error of the linear approximation $g(\eps)\approx
g(0)+\eps\cdot g'(0)$ is small, where $g'(0)=\sum_{i=1}^n\fourierset fi$.
By contrast, the proof of \cref{lem:restrictfl1givescoin} bounds
$g'(x)$ for small $x$ in terms of the level $1$ Fourier coefficients
of specific \emph{restrictions} of $f$.

\begin{proof}[Proof of \cref{lem:restrictfl1givescoin} \cite{tal_private_2019}]
	Letting $g(\eps)=\E{f\of{\coins}}$, Tal \cite{tal_private_2019}
	observed that for every $b\in[-1,1]$, the derivative $g'(b)$ can be
	bounded in terms of the level $1$ Fourier coefficients of restrictions of $f$
	via a technique of Chattopadhyay, Hatami, Hosseini, and Lovett
	\cite[Proof of Claim 3.3]{cha_hat_hos_lov_pseudorandom_2019}.
	Formally, for $b\in [-1, 1]$, let $\dist{b}$ be the distribution over
	restrictions $\rho$ where each coordinate is independently fixed to
	$\operatorname{sign}(b)$ with probability $\abs b$ and left free with
	probability $1-\abs b$. Then for every $\eps \in [\abs{b} - 1, 1 -
	\abs{b}]$, we have
	\[
		\E{f\of{\coins[b+\eps]}}
		=
		\E[\rho\sim \dist{b}]{f|_\rho\of{\coins[\eps/(1-\abs{b})]}}.
	\]
	Thus, writing $g|_\rho(\eps)=\E{f|_\rho\of{\coins}}$, we have
	by linearity of the derivative and \cref{prop:fourier_sum_bound} that
	\[ g'(b)=\E[\rho\sim\dist{b}]{\frac1{1-\abs b}\cdot g|_\rho'(0)}
		=\frac1{1-\abs b}\E[\rho\sim\dist{b}]{\sum_{i=1}^n \fourierset{f|_\rho}{i}}\,.\]
	In particular, by our assumption on the level $1$ Fourier coefficients of restrictions
	$f|_\rho$, we get that $\abs{g'(b)}\leq t/(1-\abs b)$ for all $b\in[0,1)$, and thus
	the claim follows from the fact that $\fdiff=\int_0^\eps g'(b)\,db$.
\end{proof}

\section{Acknowledgements}

We thank Salil Vadhan for helpful conversations and his feedback on this
note; Avishay Tal for his feedback on this note, telling us about
\cref{lem:restrictfl1givescoin}, and allowing us to include it and its
proof in this note; and the anonymous reviewers for their helpful
comments and suggestions.

\bibliographystyle{tocplain}   
\bibliography{main}

\begin{cjtcsauthors}
\begin{authorinfo}[rohit]
 Rohit Agrawal\\
 Harvard University, USA\\
 rohitagr\tocat{}seas\tocdot{}harvard\tocdot{}edu \\
 \url{https://rohitagr.com}
\end{authorinfo}
\end{cjtcsauthors}

\end{document}